\newtheorem{theorem}{Theorem}
\newtheorem{corollary}{Corollary}
\newtheorem{definition}{Definition}
\newtheorem{myassum}{Assumption}
\newcommand{\R}{\mathbb R}
\newcommand{\ignore}[1]{}
\begin{document}
\title{
Synchronization analysis of high order layered complex networks}

\author{~Yujuan~Han, Wenlian~Lu, and Tianping~Chen
\thanks{This work was supported by the National Key $R\&D$ Program of China (No. 2018AAA0100303), the National Natural Science Foundation of China (No. 62072111,61703271, 51879156), the Shanghai Municipal Science and Technology Major Project (No. 2018SHZDZX01) and the ZHANGJIANG LAB, and the Technology Commission of Shanghai Municipality (No. 19JC1420101).}
\thanks{T. Chen is with
University, Shanghai 200433, China. W. Lu are with the School of Mathematical Sciences, Fudan
University, Shanghai 200433, China, also with the Shanghai Center for
Mathematical Sciences, Fudan University, Shanghai 200433, China, and also
with the Shanghai Key Laboratory for Contemporary Applied Mathematics,
Fudan University, Shanghai 200433, China. Y. Han is with College of Information Engineering, Shanghai Maritime University, Shanghai 201306, China. Corresponding author: Tianping Chen. Email: tchen@fudan.edu.cn}
}

\maketitle

\begin{abstract}
In this letter, we propose high order layered complex networks. The synchronization is discussed in detail. The relations of synchronization, individual coupling matrices and the intrinsic function of the uncoupled system are given. As special cases, synchronization of monolayer networks and multiplex networks discussed in literature can be obtained.
\end{abstract}

\begin{IEEEkeywords}
High order layered network, linearly coupled systems, Synchronization, QUAD condition.
\end{IEEEkeywords}

\section{Introduction}

In recent years, synchronization of linearly coupled networks has been an important topic
in the research of complex networks. Many models and mathematics concepts have been proposed \cite{Pecora1990}-\cite{Sevilla2016Interlayer}.
For example, in \cite{Chen06New}, \cite{Lu2021QUAD}, the following model was discussed
\begin{align}\label{model1}
\frac{d x_{i}(t)}{dt}=f(x_{i}(t))
+c\sum\limits_{j=1}^{m}l_{ij} x_{j}(t),\quad i=1,\cdots,N
\end{align}
where $x_{i}(t)\in R^{n}$ is the state variable of the $i$th
node, $f:R^{n}\rightarrow R^{n}$,
$L=(l_{ij})\in R^{N\times N}$ is the coupling matrix with zero-sum
rows and $l_{ij}\ge 0$, for  $i\ne j$. The roles played by the intrinsic property of $f(x)$ and the coupling matrix $L$ in synchronization was discussed in \cite{Chen06New}.

In practice, many networks are interconnected, forming a network of networks, which can be viewed as multi-layer networks \cite{Gao2012Networks}-\cite{Wu2020Impact}. Multiplex networks is one kind of multi-layer networks that all
layers have the same nodes and nodes from one layer only interact with their counterparts in other layers.
In \cite{Sevilla2016Interlayer}-\cite{Xu2015Synchronizability}, synchronization of multiplex networks was analyzed. In \cite{Deng2020Eigenvalue}, \cite{Khalaf2019Synchronization}, the following model was studied
\begin{align}\label{model2}
\frac{d x_{i}^{k}(t)}{dt}=& f(x_{i}^{k}(t))+c\sum_{j=1}^{N}
a_{ij} x_{j}^{k}(t)+d\sum_{l=1}^{K}
b_{kl} x_{i}^{l}(t),\nonumber\\
&~~i=1,\cdots,N,~k=1,\cdots,K
\end{align}
where $x_i^k(t)\in\R^n$ is the state of the $i$th node in the $k$th layer, $A=(a_{ij})\in R^{N\times N}$ is the intra-layer coupling matrix within layers with zero-sum rows and $a_{ij}\ge 0$, for  $i\ne j$, $B=(b_{kl})\in R^{K\times K}$ represents the inter-layer coupling matrix with zero-sum rows and $b_{kl}\ge 0$, for  $k\ne l$, $c$ and $d$ are coupling strengths.

For multiplex networks, two forms of synchronization phenomenon can emerge, i.e., intra-layer and inter-layer synchronization \cite{Bidesh2019Intralayer}. Intra-layer synchronization means that all nodes within the same layer evolve synchronously, while inter-layer synchronization means that each node in one layer evolves synchronously with all its counterparts in other layers.

In this paper, we propose more general multi-layer networks, called high order layered networks, and study synchronization of these networks. More precisely, we investigate the following high order layered coupled networks
\begin{align}\label{model}
\frac{d x_{i_{1},\cdots,i_{m}}(t)}{dt}&=f(x_{i_{1},\cdots,i_{m}}(t))
+c_{1}\sum\limits_{j=1}^{I_{1}}l^{(1)}_{i_{1}j} x_{j,i_{2},\cdots,i_{m}}(t)
\nonumber\\&
+c_{2}\sum\limits_{j=1}^{I_{2}}l^{(2)}_{i_{2}j} x_{i_{1},j,\cdots,i_{m}}(t)+\cdots
\nonumber\\&
+c_{k}\sum\limits_{j=1}^{I_{k}}l^{(k)}_{i_{k}j} x_{i_{1},\cdots,i_{k-1},j,i_{k+1},\cdots,i_{m}}(t)+\cdots\nonumber\\&
+c_{m}\sum\limits_{j=1}^{I_{m}}l^{(m)}_{i_{m}j} x_{i_{1},\cdots,j
}(t)
\end{align}

It is clear that models (\ref{model1}) and (\ref{model2}) are special cases of the general model (\ref{model}). Therefore, the results on model (\ref{model}) can apply to these special cases.

In this paper, we study synchronization of high order layered networks (\ref{model}) and explore how the synchronization depends on the eigenvalues of the coupling matrices $L^{(k)}$ and the coupling strengths $c_{k}$, $k=1,\cdots,m$.

This paper is organized as follows. Section \ref{sec-model} introduces the models and give some basic notations, definitions and assumptions. Section \ref{sec-main} investigates synchronization of high order layered networks in detail. Numerical examples are given in Section \ref{sec-simu} to illustrate the effectiveness of the theoretical results. Finally, Section \ref{sec-conc} concludes the paper.

\section{Model description and background}\label{sec-model}

Let us consider a setup in which the dynamics of nodes in high order layered networks is linearly coupled. The coupling among nodes are assumed to be undirected.

We call complex network (\ref{model1}) the {\em first order layered network} and multiplex network (\ref{model2}) the {\em second order layered network}.

Notice that each layer of multiplex networks is a first order layered network. If we take each layer as a ``node" whose own dynamics follows model (\ref{model1}) and $b_{kl}$ as the coupling strength between the $k$th and $l$th ``nodes", these linearly coupled ``nodes" will generate the multiplex network (\ref{model2}). This implies that the second order layered network is in fact a network composed of identical first order layered networks.


Inductively, we can obtain a {\em $m$th order layered network} by regarding the $(m-1)$th order layered network as a ``node" and linearly coupling these ``nodes". And we call the network topology among the $(m-1)$th order layered networks the {\em $m$th order subnetwork}. In sequence, we can define the {\em $k$th order subnetwork} of the $m$th order layered network, $k=m-1,\cdots,1$ (see Fig. \ref{fig-network}).

Suppose the $m$th order layered network have $I_k$ nodes in its $k$th order subnetwork, $k=1,\cdots,m$. Denote by $L^{(k)}\in\R^{I_k\times I_k}$ the coupling matrix of the $k$th order subnetwork with zero-sum rows, i.e., $[L^{(k)}]_{ij}\ge 0$ for $i\neq j$ and $[L^{(k)}]_{ii} = -\sum_{j=1}^{I_k}[L^{(k)}]_{ij}$, and $c_k$ the coupling strength among nodes in the $k$th order subnetwork.

Next, we will introduce the model of high order layered network inductively. Firstly, we consider the first order layered network. Denote by $x_{i_1}^{(1)}(t)$ the state of the $i_1$th node in the first order layered network. Then the dynamics of the first order layered network satisfies
\begin{eqnarray}\label{one-layer-model}
\frac{d x_{i_1}^{(1)}(t)}{dt}=f(x_{i_{1}}^{(1)}(t))+c_1\sum_{j=1}^{I_{1}}
l_{i_{1},j}^{(1)} x_{j}^{(1)}(t),~i_1=1,\cdots,I_1.
\end{eqnarray}

Denote $x^{(1)}(t)=[x^{(1)}_{1}(t)^{T},\cdots,x^{(1)}_{I_{1}}(t)^{T}]^{T}$,
$f(x^{(1)}(t))=[f(x^{(1)}_{1}(t))^{T},\cdots,
f(x^{(1)}_{I_{1}}(t))^{T}]^{T}$. System (\ref{one-layer-model}) can be written in the matrix form
\begin{align}\label{matrix-form-model1}
\frac{d x^{(1)}(t)}{dt}=f(x^{(1)}(t))+c_{1}(L^{(1)}\otimes E_{n})x^{(1)}(t)
\end{align}
where $\otimes$ is the Kronecker product and $E_n$ is the $n\times n$ identity matrix. Consider that the second order layered network is obtained by linearly coupling $I_2$ identical first order layered networks. Denote by ${x}^{(2)}_{i_{2}}(t)$ the state of the $i_2$th first order layered network, whose uncoupled dynamics follows model (\ref{matrix-form-model1}). Hence, the dynamics of the second order layered network satisfies
\begin{align*}
\frac{d x_{i_2}^{(2)}(t)}{dt}
= & f(x^{(2)}_{i_{2}}(t)) +c_{1}(L^{(1)}\otimes E_{n}) x^{(2)}_{i_{2}}(t)\\
& +c_{2}\sum_{j=1}^{I_2} l_{i_{2}j}^{(2)} x_{j}^{(2)}(t)
\end{align*}
Denote $x^{(2)}(t)=[x^{(2)}_{1}(t)^{T},\cdots,x^{(2)}_{I_{2}}(t)^{T}]^{T}$,
$f(x^{(2)}(t))=[f(x^{(2)}_{1}(t))^{T},\cdots,
f(x^{(2)}_{I_{2}}(t))^{T}]^{T}$. Then the dynamics of second order layered networks can be written in the matrix form
\begin{align}
\frac{d x^{(2)}(t)}{dt}&=f(x^{(2)}(t))
+c_{1}(E_{I_2}\otimes L^{(1)}\otimes E_{n}) x^{(2)}(t)\nonumber\\
&+c_{2}(L^{(2)}\otimes E_{I_{1}}\otimes E_n) x^{(2)}(t)
\end{align}

Inductively, we obtain the dynamics of the $m$th order layered network as follows
\begin{align}\nonumber
&\frac{d x^{(m)}(t)}{dt}=f(x^{(m)}(t))
+\sum_{k=1}^{m}c_{k}(
E_{I_{m}}\otimes \cdots \otimes E_{I_{k+1}}\otimes
L^{(k)}\\\label{high-order-model}
& ~~~~~~~~~~~~~~~~~~~~~~~~~\otimes E_{I_{k-1}}\cdots \otimes E_{I_{1}}\otimes E_n)x^{(m)}(t)
\end{align}	
where $x^{(m)}(t)=[x^{(m)}_{1}(t)^{T},\cdots,x^{(m)}_{I_{m}}(t)^{T}]^{T}\in\R^{nI_1\cdots I_m}$.

Notice that the $m$th order layered network has a total number of $(I_m\cdots I_2I_1)$ nodes in the network (see red nodes in Fig. \ref{fig-network}). Let us consider the dynamics of these node, which are denoted by $v_{i_1,i_2,\cdots,i_m}$ with $i_k$ being the node index in the $k$th order subnetwork, $k=1,\cdots,m$. The state of each node $v_{i_1,i_2,\cdots,i_m}$ is represented as $x_{i_1,i_2,\cdots,i_m}\in\R^n$. Apparently, $x_{i_1,i_2,\cdots,i_m} = [\cdots[[x^{(m)}_{i_m}]_{i_{m-1}}]_{i_{m-2}}\cdots]_{i_1}$. Then from (\ref{high-order-model}), we get that the dynamics of node $v_{i_1,i_2,\cdots,i_m}$ in the $m$th order layered network satisfies model (\ref{model}).

Similar to synchronization in multiplex networks, different forms of synchronization phenomenon can emerge in high order layered networks. 

In this paper, synchronization of high order layered networks is defined as follows.
\begin{definition}
Consider the $m$th order layered network (\ref{model}).
\begin{enumerate}
  \item For any $1\le k\le m$, network (\ref{model}) is said to reach $k$th order synchronization if for any $1\le p,q\le I_k$,
\begin{align}\nonumber
&\lim_{t\to\infty} \|x_{i_1,\cdots,i_{k-1},p,i_{k+1},\cdots,i_m}(t)-x_{i_1,\cdots,i_{k-1},q,i_{k+1},\cdots,i_m}(t)\|\\
&=0
\end{align}
holds for any fixed $i_s=1,\cdots,I_s, s=1,\cdots,k-1,k,\cdots,m$.
  \item Network (\ref{model}) is said to realize complete synchronization if
\begin{align}
\lim_{t\to\infty} \|x_{i_1,i_2,\cdots,i_m}(t)-x_{j_1,j_2,\cdots,j_m}(t)\|=0
\end{align}
holds for any $1\le i_k,j_k\le I_k$, $k=1,\cdots,m$,
\end{enumerate}
\end{definition}

Throughout this paper, we suppose $f$ satisfies the following QUAD condition. The role of the QUAD condition in synchronization of complex networks was investigated in \cite{Chen06New,Lu2021QUAD,Delellis2011On}.
\begin{myassum}\label{assump-f}
The continuous function $f(x,t):\mathbb R^n\times [0,+\infty)\to\mathbb R^n$ is said to satisfy QUAD condition if there exists a constant $\alpha$ such that
\begin{align}
(x-y)^{\top}\left[f(x)-f(y)\right]
\le \alpha(x-y)^{\top}(x-y)
\end{align}
holds for all $x, y\in\R^{n}$.
\end{myassum}
\begin{myassum}\label{assump-L}
All coupling matrices $L^{(k)} = [l^{(k)}_{i,j}]\in\R^{I_{k}\times I_k}$, $k=1,\cdots,m$ are symmetric.  For any $k$, the eigenvalues of $L^{(k)}$ are $0= \lambda_{1}^{(k)}\ge \lambda_{2}^{(k)}\ge \cdots\ge\lambda_{I_k}^{(k)}$.
\end{myassum}

\section{Main Results}\label{sec-main}

In this section, we provide a unified approach to analyse synchronization of the network (\ref{model}). The following theorem reveals how the synchronization depends on the eigenvalues of the coupling matrices $L^{(k)}$ and the coupling strengths $c_{k}$, $k=1,\cdots,m$.

\begin{theorem}\label{thm-1}
Suppose Assumptions \ref{assump-f} and \ref{assump-L} are satisfied. For any $k=1,\cdots,m$, the $m$th order layered network (\ref{model}) can reach $k$th order synchronization if $c_k\lambda_2^{(k)}+\alpha<0$, where $\alpha$ is the constant defined in Assumption 1.
\end{theorem}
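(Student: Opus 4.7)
The plan is to isolate the $k$th-order-synchronization error as the image of $x^{(m)}(t)$ under a suitable projection, and show that an associated Lyapunov function contracts exponentially when $c_k\lambda_2^{(k)}+\alpha<0$. Concretely, let $P_k = E_{I_k}-\frac{1}{I_k}\mathbf{1}_{I_k}\mathbf{1}_{I_k}^{\top}$, the orthogonal projector off the consensus direction of $L^{(k)}$, and define the block projector
\[
\Pi_k \;=\; E_{I_m}\otimes\cdots\otimes E_{I_{k+1}}\otimes P_k \otimes E_{I_{k-1}}\otimes\cdots\otimes E_{I_1}\otimes E_n.
\]
Set $y(t)=\Pi_k x^{(m)}(t)$ and $V(t)=\tfrac12\|y(t)\|^2$. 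Using $y_{i_1,\ldots,i_m}=x_{i_1,\ldots,i_m}-\bar x_{i_1,\ldots,\widehat{i_k},\ldots,i_m}$, where the bar denotes the average over $i_k$, one immediately has $\|y(t)\|^2 = \sum_{i_1,\ldots,\widehat{i_k},\ldots,i_m}\frac{1}{2I_k}\sum_{p,q}\|x_{\ldots,p,\ldots}-x_{\ldots,q,\ldots}\|^2$, so $V(t)\to 0$ is equivalent to $k$th-order synchronization.

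Differentiating $V$ and plugging in (\ref{high-order-model}) gives $\dot V = \langle y,f(x^{(m)})\rangle + \sum_{j=1}^{m} c_j\langle y, M_j x^{(m)}\rangle$, where $M_j$ is the Kronecker-product coupling operator for the $j$th subnetwork. Each $M_j$ is symmetric and (since $L^{(j)}$ has zero-sum rows and nonnegative off-diagonals) negative semidefinite. For $j\ne k$, $\Pi_k$ and $M_j$ act on disjoint tensor factors, hence commute, and $\Pi_k$ is a projector, so
\[
\langle y, M_j x^{(m)}\rangle = \langle \Pi_k y, M_j x^{(m)}\rangle = \langle y, M_j \Pi_k x^{(m)}\rangle = \langle y, M_j y\rangle \le 0,
\]
so these terms are harmless (using $c_j\ge 0$). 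For $j=k$, the identity $L^{(k)}P_k = P_k L^{(k)} = L^{(k)}$ (because $L^{(k)}\mathbf{1}=0$) yields $M_k \Pi_k = \Pi_k M_k = M_k$, whence $M_k x^{(m)} = M_k y$. Since $y$ lies, fiber-wise over each $(i_1,\ldots,\widehat{i_k},\ldots,i_m)$, in $\mathbf{1}^{\perp}\subset\R^{I_k}$ and $L^{(k)}$ restricted to $\mathbf{1}^{\perp}$ has largest eigenvalue $\lambda_2^{(k)}\le 0$, we obtain $\langle y,M_k y\rangle \le \lambda_2^{(k)}\|y\|^2$.

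The key estimate is the intrinsic term. The idea is to convert the centered inner product into a sum of pairwise differences so QUAD applies. For each fixed $(i_1,\ldots,\widehat{i_k},\ldots,i_m)$, one verifies the algebraic identity
\[
\sum_{p=1}^{I_k}(x_p-\bar x)^{\top} f(x_p) \;=\; \frac{1}{2I_k}\sum_{p,q=1}^{I_k}(x_p-x_q)^{\top}\!\bigl[f(x_p)-f(x_q)\bigr],
\]
where I abbreviate $x_p=x_{\ldots,p,\ldots}$. Applying Assumption~\ref{assump-f} to each summand and the same identity on the right gives $\sum_p(x_p-\bar x)^{\top}f(x_p)\le \alpha\sum_p\|x_p-\bar x\|^2$; summing over the remaining indices yields $\langle y, f(x^{(m)})\rangle \le \alpha\|y\|^2$.

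Combining everything,
\[
\dot V(t) \;\le\; \bigl(\alpha + c_k\lambda_2^{(k)}\bigr)\|y(t)\|^2 \;=\; 2\bigl(\alpha+c_k\lambda_2^{(k)}\bigr)V(t),
\]
so under the hypothesis $c_k\lambda_2^{(k)}+\alpha<0$ Gronwall's inequality gives exponential decay of $V$, hence of all differences $x_{\ldots,p,\ldots}-x_{\ldots,q,\ldots}$, which is exactly $k$th-order synchronization. I expect the bookkeeping of the Kronecker algebra, specifically verifying the commutation/absorption identities $\Pi_k M_j=M_j\Pi_k$ for $j\ne k$ and $\Pi_k M_k=M_k$ for $j=k$, together with the pairwise-differences identity that unlocks QUAD on the centered variables, to be the only non-routine steps.
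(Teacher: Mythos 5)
Your proposal is correct and follows essentially the same route as the paper: the same Lyapunov function (sum of squared deviations from the average over the $k$th index), the same decomposition of $\dot V$ into an intrinsic term bounded by $\alpha\|y\|^2$ via QUAD, the $k$th coupling term bounded by $c_k\lambda_2^{(k)}\|y\|^2$ via the spectral gap on $\mathbf{1}^{\perp}$, and the remaining coupling terms discarded as nonpositive. The only differences are cosmetic --- you phrase the argument with projectors and Kronecker algebra for general $m$ and $k$ where the paper works out $m=3$, $k=1$ explicitly, and you reach the QUAD bound through the pairwise-differences identity rather than by inserting $f(\bar x)$ and using $\sum_p (x_p-\bar x)=0$; both yield the identical estimate.
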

\begin{proof}
Here we mainly consider the case  $m=3$. The proof for case $m>3$ can be given in a similar argument. For any fixed $1\le i_2\le I_2$ and $1\le i_3\le I_3$, we define
\begin{align}
\bar{x}^{(1)}_{\cdot,i_2,i_3}(t)=\frac{1}{I_{1}}\sum_{j=1}^{I_{1}}x_{j,i_2,i_3}(t).
\end{align}
Define a Lyapunov funciton
\begin{align}
V(X(t))=\frac{1}{2}\sum_{i_1,i_2,i_3}\|x_{i_1,i_2,i_3}(t)-\bar{x}^{(1)}_{\cdot,i_2,i_3}(t)\|_2^2.
\end{align}
Because of $\sum_{i_1=1}^{I_{1}}(x_{i_1,i_2,i_3}(t)-\bar{x}^{(1)}_{\cdot,i_2,i_3}(t))=0$, we have
\begin{align}\label{dot-barx}
\sum_{i_1,i_2,i_3}[x_{i_1,i_2,i_3}(t)-\bar{x}^{(1)}_{\cdot,i_2,i_3}(t)]^{\top}\frac{d\bar{x}^{(1)}_{\cdot,i_2,i_3}(t)}{dt}=0
\end{align}
and
\begin{align}\label{V1-barf}
\sum_{i_1,i_2,i_3}[x_{i_1,i_2,i_3}(t)-\bar{x}^{(1)}_{\cdot,i_2,i_3}(t)]^{\top}f(\bar{x}^{(1)}_{\cdot,i_2,i_3}(t))=0.
\end{align}
Differentiating $V(X(t))$ and by (\ref{dot-barx}), we have
\begin{align}\nonumber
&\frac{dV(X(t))}{dt}=\sum_{i_1,i_2,i_3}[x_{i_1,i_2,i_3}(t)-\bar{x}^{(1)}_{\cdot,i_2,i_3}(t)]^{\top}f(x_{i_1,i_2,i_3}(t))\\\nonumber
&~~~~~~+c_1\sum_{i_1,i_2,i_3}\sum_{j=1}^{I_1}[x_{i_1,i_2,i_3}(t)-\bar{x}^{(1)}_{\cdot,i_2,i_3}(t)]^{\top}l_{i_1j}^{(1)}x_{j,i_2,i_3}(t)\\\nonumber
&~~~~~~+c_2\sum_{i_1,i_2,i_3}\sum_{j=1}^{I_2}[x_{i_1,i_2,i_3}(t)-\bar{x}^{(1)}_{\cdot,i_2,i_3}(t)]^{\top}l_{i_2j}^{(2)}x_{i_1,j,i_3}(t)\\\nonumber
&~~~~~~+c_3\sum_{i_1,i_2,i_3}\sum_{j=1}^{I_3}[x_{i_1,i_2,i_3}(t)-\bar{x}^{(1)}_{\cdot,i_2,i_3}(t)]^{\top}l_{i_3j}^{(3)}x_{i_1,i_2,j}(t)\\
&~~~~= V_1(t) + V_2(t) + V_3(t)+V_4(t)
\end{align}
From (\ref{V1-barf}) and Assumption 1, we have
\begin{align}
V_1(t)\le \alpha\sum_{i_1,i_2,i_3}\|x_{i_1,i_2,i_3}(t)-\bar{x}^{(1)}_{\cdot,i_2,i_3}(t)\|^2_2
\end{align}
By Assumption 2 that $L^{(1)}$ is symmetric and has all row sums zero, we have
\begin{align}\nonumber
V_2(t) &= c_1\sum_{i_1,i_2,i_3}\sum_{j=1}^{I_1}l_{i_1 j}^{(1)}[x_{i_1,i_2,i_3}(t)-\bar{x}^{(1)}_{\cdot,i_2,i_3}(t)]^{\top}\\\nonumber
&~~~~~~~~~~~~~~~~~~~~~\cdot[x_{j,i_2,i_3}(t)-\bar{x}^{(1)}_{\cdot,i_2,i_3}(t)]\\
&\le c_1\lambda_2^{(1)}\sum_{i_1,i_2,i_3}\|x_{i_1,i_2,i_3}(t)-\bar{x}^{(1)}_{\cdot,i_2,i_3}(t)\|_2^2
\end{align}
By Assumption 2 that $L^{(2)}$ is symmetric, semi-negative and has all row sums zero, we have
\begin{align}
V_3(t) &= c_2\sum_{i_1,i_2,i_3}\sum_{j=1}^{I_2}l_{i_2 j}^{(2)}x_{i_1,i_2,i_3}(t)^{\top}x_{i_1,j,i_3}(t)
\le 0
\end{align}
Similarly, we can derive $V_4(t)\le 0$. Therefore,
\begin{align}
\frac{dV(X(t))}{dt}\le 2(c_1\lambda_2^{(1)}+\alpha)V(t)
\end{align}
which implies for any fixed $1\le i_2\le I_2$ and $1\le i_3\le I_3$,
\begin{align}\label{1th-syn}
\|x_{p,i_2,i_3}(t)-x_{q,i_2,i_3}(t)\|=O\left(e^{2(c_1\lambda_2^{(1)}+\alpha)t}\right),~1\le p,q\le I_1
\end{align}
i.e. the 1st order synchronization can be realized if $c_1\lambda_2^{(1)}+\alpha<0$.
Similarly, we can derive that for any fixed $1\le i_1\le I_1$ and $1\le i_3\le I_3$,
\begin{align}\label{2nd-syn}
\|x_{i_1,p,i_3}(t)-x_{i_1,q,i_3}(t)\|=O\left(e^{2(c_2\lambda_2^{(2)}+\alpha)t}\right),~1\le p,q\le I_2
\end{align}
and for any fixed $1\le i_1\le I_1$ and $1\le i_2\le I_2$,
\begin{align}\label{3rd-syn}
\|x_{i_1,i_2,p}(t)-x_{i_1,i_2,q}(t)\|=O\left(e^{2(c_3\lambda_2^{(3)}+\alpha)t}\right),~1\le p,q\le I_3.
\end{align}
From (\ref{1th-syn}), (\ref{2nd-syn}) and (\ref{3rd-syn}), we have that the $k$th order synchronization can be realized if $c_k\lambda_2^{(k)}+\alpha<0$, $k=1,2,3$. This completes the proof.
\end{proof}
Combing (\ref{1th-syn}), (\ref{2nd-syn}) and (\ref{3rd-syn}), we have that for any $1\le i_k,j_k\le I_k$, $k=1,2,3$,
\begin{align}
\|x_{i_1,i_2,i_3}(t)-x_{j_1,j_2,j_3}(t)\|=O\left(e^{\max_{k}2(c_k\lambda_2^{(k)}+\alpha)t}\right)
\end{align}
This implies that complete synchronization of the 3rd order layered network can be realized if $\max_{k=1,2,3}(c_k\lambda_2^{(k)}+\alpha)<0$, which leads to the following corollary.
\begin{corollary}
Suppose Assumptions 1 and 2 are satisfied. Then the $m$th order layered network (\ref{model}) will realize complete synchronization if $\max_{k=1,\cdots,m}(c_k\lambda_2^{(k)}+\alpha)<0$.
\end{corollary}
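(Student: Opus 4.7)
The plan is to derive the corollary directly from Theorem \ref{thm-1} via a telescoping decomposition, without any new Lyapunov construction. The hypothesis $\max_{k}(c_k\lambda_2^{(k)}+\alpha)<0$ certainly implies $c_k\lambda_2^{(k)}+\alpha<0$ for each $k=1,\ldots,m$, so Theorem \ref{thm-1} is applicable at every order and yields, for any fixed choice of the remaining indices and any $1\le p,q\le I_k$, the estimate
\begin{align*}
&\|x_{i_1,\ldots,i_{k-1},p,i_{k+1},\ldots,i_m}(t)-x_{i_1,\ldots,i_{k-1},q,i_{k+1},\ldots,i_m}(t)\|\\
&=O\big(e^{2(c_k\lambda_2^{(k)}+\alpha)t}\big).
\end{align*}

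Given two arbitrary index tuples $(i_1,\ldots,i_m)$ and $(j_1,\ldots,j_m)$, I would interpolate between them by flipping coordinates one at a time, so that consecutive tuples in the chain differ in exactly one position. The triangle inequality then yields
\begin{align*}
&\|x_{i_1,\ldots,i_m}(t)-x_{j_1,\ldots,j_m}(t)\|\\
&\le\sum_{k=1}^{m}\|x_{j_1,\ldots,j_{k-1},i_k,i_{k+1},\ldots,i_m}(t)\\
&\quad\quad-x_{j_1,\ldots,j_{k-1},j_k,i_{k+1},\ldots,i_m}(t)\|,
\end{align*}
and the $k$th summand falls squarely under the $k$th order estimate above, since its two arguments coincide on every coordinate except the $k$th. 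A finite sum of exponentials is dominated by the slowest decaying one, so the left-hand side is $O\big(e^{2\max_{k}(c_k\lambda_2^{(k)}+\alpha)t}\big)$. Under the corollary's hypothesis this vanishes as $t\to\infty$, which is exactly complete synchronization.

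I do not anticipate any serious obstacle. The one point that deserves attention is uniformity: the $k$th order bound from Theorem \ref{thm-1} must hold with the same exponent across every choice of the frozen indices, so that each summand in the telescoping chain is estimated at the same rate. This uniformity is automatic, because the Lyapunov function in the proof of Theorem \ref{thm-1} already sums over all index combinations simultaneously, and therefore controls every pairwise $k$th order difference at once. The corollary is thus essentially a one-line consequence of the theorem, as the authors themselves hint in the paragraph immediately preceding its statement.
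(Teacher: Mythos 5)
Your proposal is correct and follows essentially the same route as the paper: the authors also deduce the corollary by combining the per-order exponential estimates (\ref{1th-syn})--(\ref{3rd-syn}) from Theorem \ref{thm-1} and bounding the overall difference by the slowest-decaying rate $e^{2\max_k(c_k\lambda_2^{(k)}+\alpha)t}$. You merely make explicit the telescoping chain and triangle inequality that the paper leaves implicit in the phrase ``combining'' the three estimates, and your remark on uniformity of the rate is a sound (if unstated in the paper) observation.
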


It is clear that the multiplex network (\ref{model2}) is a special case of model (\ref{model}) with $m=2$, $L^{(1)}=A$ and $L^{(2)}=B$. Therefore, by Theorem \ref{thm-1}, we obtain the following corollary.
\begin{corollary}
Consider the multiplex network (\ref{model2}). Suppose Assumption 1 is satisfied and coupling matrices $A$ and $B$ are symmetric. The eigenvalues of $A$ are $0=\mu_1\ge\mu_2\ge\cdots\ge\mu_N$. The eigenvalues of $B$ are $0=\nu_1\ge\nu_2\ge\cdots\ge\nu_K$. Then the multiplex network will realize i) intra-layer synchronization if $c\mu_2+\alpha<0$, ii) inter-layer synchronization if $d\nu_2+\alpha<0$, and iii) complete synchronization if $c\max\{\mu_2,\nu_2\}+\alpha<0$.
\end{corollary}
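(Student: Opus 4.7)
The plan is to prove the $k$th order synchronization separately for each fixed $k\in\{1,\ldots,m\}$ by constructing a Lyapunov function that captures only the dispersion along the $k$th index, while treating the remaining $m-1$ indices as spectators. Concretely, I would define the partial average
\[
\bar{x}^{(k)}_{i_1,\ldots,i_{k-1},\cdot,i_{k+1},\ldots,i_m}(t)=\frac{1}{I_k}\sum_{j=1}^{I_k}x_{i_1,\ldots,i_{k-1},j,i_{k+1},\ldots,i_m}(t)
\]
and take
\[
V(t)=\frac{1}{2}\sum_{i_1,\ldots,i_m}\|x_{i_1,\ldots,i_m}(t)-\bar{x}^{(k)}(t)\|_2^2,
\]
aiming for the differential inequality $\dot V(t)\le 2(c_k\lambda_2^{(k)}+\alpha)V(t)$. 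Gr\"onwall's inequality then yields exponential decay of $V$ under the hypothesis $c_k\lambda_2^{(k)}+\alpha<0$, which is precisely $k$th order synchronization.

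Differentiating $V$ produces contributions of three kinds: the drift from $f$, the in-direction coupling through $L^{(k)}$, and the $m-1$ cross-direction couplings through $L^{(j)}$ with $j\ne k$. Using $\sum_{i_k}(x-\bar{x}^{(k)})=0$, I can subtract $f(\bar{x}^{(k)})$ without altering the drift inner product and then apply the QUAD condition pairwise to bound the drift contribution by $2\alpha V(t)$. For the in-direction term, symmetry of $L^{(k)}$ together with the same centering identity lets me rewrite it as a quadratic form in the deviations $\{x_{\ldots,j,\ldots}-\bar{x}^{(k)}\}_{j=1}^{I_k}$ at each fixed slab of the other indices; because these vectors lie in the orthogonal complement of the $1$-eigenspace of $L^{(k)}$, Assumption~\ref{assump-L} bounds that contribution by $2c_k\lambda_2^{(k)}V(t)$.

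The main obstacle, and the place where the argument is not a literal copy of the monolayer QUAD proof, is showing that each cross-direction coupling term ($j\ne k$) is nonpositive. For these terms I would decompose $x=\bar{x}^{(k)}+z$ where $z$ satisfies $\sum_{i_k}z=0$, and observe that the mixed $\bar{x}^{(k)}$--$z$ bilinear pieces drop out after the $i_k$-summation, because $\bar{x}^{(k)}$ is independent of $i_k$ while the $L^{(j)}$ action only shuffles the $j$th index. What survives is, for each fixed slab of the other indices, the quadratic form $z^\top L^{(j)} z$, which is $\le 0$ by negative semi-definiteness of $L^{(j)}$ (also from Assumption~\ref{assump-L}). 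Summing the three estimates gives the desired inequality and hence the theorem; the complete-synchronization corollary follows by applying the same bound for each $k$ and chaining via the triangle inequality along a path of index-by-index swaps from $(i_1,\ldots,i_m)$ to $(j_1,\ldots,j_m)$.
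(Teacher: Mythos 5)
Your proposal is correct and is essentially the paper's own argument: the paper proves the corollary by invoking its Theorem~\ref{thm-1} as the special case $m=2$, $L^{(1)}=A$, $L^{(2)}=B$, and the proof of that theorem uses exactly your Lyapunov function $V=\frac12\sum\|x-\bar{x}^{(k)}\|_2^2$, the QUAD bound for the drift, the spectral bound $c_k\lambda_2^{(k)}$ on the in-direction quadratic form restricted to the zero-mean subspace, and nonpositivity of the cross-direction terms. Your handling of the cross terms (decomposing $x=\bar{x}^{(k)}+z$ so the mixed bilinear piece vanishes and only $z^{\top}L^{(j)}z\le 0$ survives) is in fact slightly more transparent than the paper's one-line reduction, but it is the same estimate.
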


\section{Numerical simulations}\label{sec-simu}
In this section, we give some numerical examples to verify the effectiveness of theoretical results given in previous.

In these examples, the network is a 3rd order layered network whose full graph is given in Fig. \ref{fig-network}(a).
Denote by $x_{i_1,i_2,i_3}(t)$ the state of node $v_{i_1,i_2,i_3}$ at time $t$, where $1\le i_1\le 4$, $1\le i_2\le 3$, $1\le i_3\le 2$. The intrinsic dynamics of each node is the Lorenz system described by
\begin{align}
\dot{s}(t)=f(s(t))=\left(
\begin{array}{c}
10(s^2(t)-s^1(t))\\
28s^1(t)-s^2(t)-s^1(t)s^3(t)\\
s^1(t)s^2(t)-8s^3(t)/3
\end{array}\right).
\end{align}

Define qualities $E_{1}(t)$, $E_{2}(t)$ and $E_3(t)$ for the $1$st order, $2$nd order and $3$rd order synchronization errors respectively, where
\begin{align}\nonumber
&E_1(t) = \sum_{p<q}\sum_{i_2=1}^{3}\sum_{i_3=1}^2
\|x_{p,i_2,i_3}(t)- x_{q,i_2,i_3}(t)\|_2^2,\\\nonumber
&E_2(t) =  \sum_{i_1=1}^{4}\sum_{p<q}\sum_{i_3=1}^2
\|x_{i_1,p,i_3}(t)- x_{i_1,q,i_3}(t)\|_2^2,\\
&E_3(t) = \sum_{i_1=1}^{4}\sum_{i_2=1}^3
\|x_{i_1,i_2,1}(t)- x_{i_1,i_2,2}(t)\|_2^2.
\end{align}
As for the coupling matrices, we pick
\begin{align}\nonumber
L^{(1)} =&\left(
    \begin{array}{cccc}
    -3  &   3  &   0 &  0\\
     3  &  -6  &   2 &  1\\
     0  &   2  &  -4 &  2\\
     0  &   1  &   2 &  -3
    \end{array}
  \right),\\\nonumber
  L^{(2)} =&\left(
    \begin{array}{ccc}
    -1  &  1  &  0\\
    1   &  -2  &  1\\
    0  &  1  &  -1
    \end{array}
  \right),\\
  L^{(3)} =&\left(
    \begin{array}{cc}
    -2  &   2  \\
     2  &  -2
    \end{array}
  \right).
\end{align}
The largest nonzero eigenvalue of $L^{(1)}, L^{(2)}$ and $L^{(3)}$ are $\lambda_2^{(1)}=-2.0905$, $\lambda_2^{(2)}=-1$, $\lambda_2^{(3)}=-4$.

{\bf Example 1}. In this example, the initial state of each node in the network is randomly chosen in $[0,1]$. The coupling coefficients are $c_1=1$, $c_2=1$ and $c_3=1$.

\begin{figure}
\centering
\includegraphics[width=0.4\textwidth]{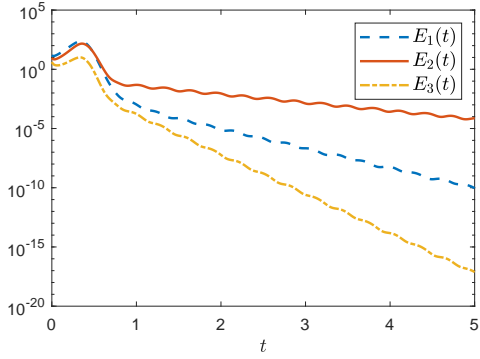}
\caption{Dynamics of $E_{1}(t)$, $E_{2}(t)$ and $E_3(t)$ with respect to time $t$ for the linearly coupled high order layered network (\ref{model}). }
\label{fig1}
\end{figure}

In Fig. \ref{fig1}, we plot the dynamics of $E_{1}(t)$, $E_{2}(t)$ and $E_3(t)$ with respect to time, which shows that $E_3(t)$ converges to zero fastest and $E_2(t)$ converges slowest. This numerical result coincides with the fact that $c_3\lambda_2^{(3)}\le c_1\lambda_2^{(1)} \le c_2\lambda_2^{(2)}$.

\begin{figure}[htbp]
\centering
    \begin{minipage}[t]{0.8\linewidth}
        \centering
        \includegraphics[width=\textwidth]{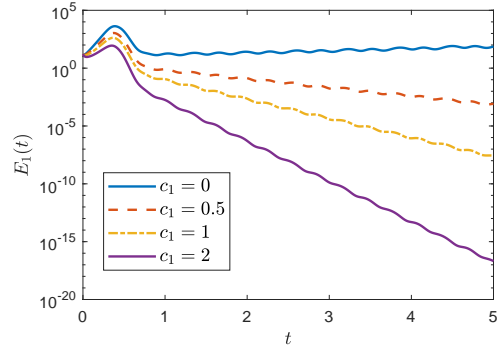}
        \centerline{\footnotesize{(a) }}
    \end{minipage}\\
    \begin{minipage}[t]{0.8\linewidth}
        \centering
        \includegraphics[width=\textwidth]{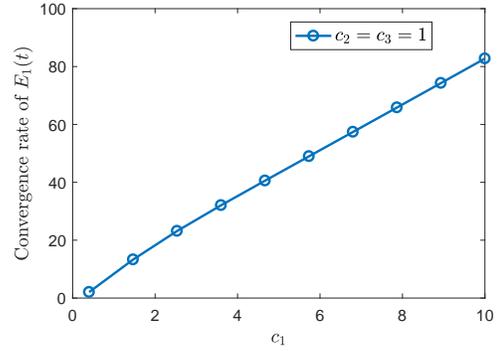}
        \centerline{\footnotesize{(b)}}
    \end{minipage}
    \caption{(a) Dynamics of $E_{1}(t)$ under different values of $c_1$. (b) The convergence rate of $E_1(t)$ with respect to $c_1$. The parameters that are kept fixed are $c_2=1$, $c_3=1$.}
\label{fig2}
\end{figure}

{\bf Example 2}. In this example, we want to reveal the relevance between synchronization and coupling coefficients. For brevity, we only present the results on coupling coefficient $c_1$ since the results on the rest two coefficients are similar.

From Fig. \ref{fig2}, it can be seen that the larger that $c_1$ is, the faster $E_1(t)$ converges to zero. The convergence rate of $E_1(t)$ scales linearly with respect to $c_1$, which coincides with the estimation (\ref{1th-syn}). That is, $1$st order synchronization depends on the coupling coefficient $c_1$.

Dynamics of $E_2(t)$ and $E_3(t)$ give different results.
Here we only present the results on $E_2(t)$ since dynamics of $E_2(t)$ and $E_3(t)$ give the similar results. From Fig. \ref{fig3}, it can be seen that as $c_1$ increases, the convergence rate of $E_2(t)$ increases distinctly at first, then rarely changes for $c_1>2$. This is because strengthening interactions among nodes in the $1$st order subnetwork enhances the $2$nd order synchronization. But the enhancement is limited since the $2$th order synchronization is mainly governed by the value of $c_2$ and $\lambda_2^{(2)}$.

\begin{figure}[htbp]
\centering
    \begin{minipage}[t]{0.8\linewidth}
        \centering
        \includegraphics[width=\textwidth]{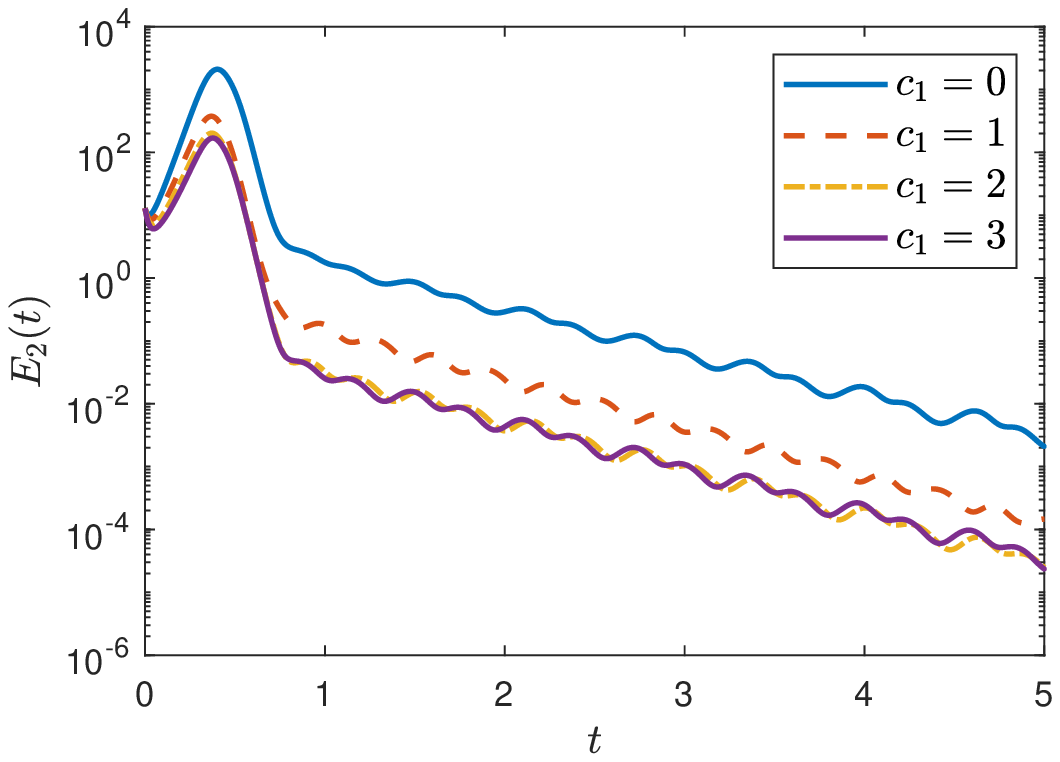}
        \centerline{\footnotesize{(a) }}
    \end{minipage}\\
    \begin{minipage}[t]{0.8\linewidth}
        \centering
        \includegraphics[width=\textwidth]{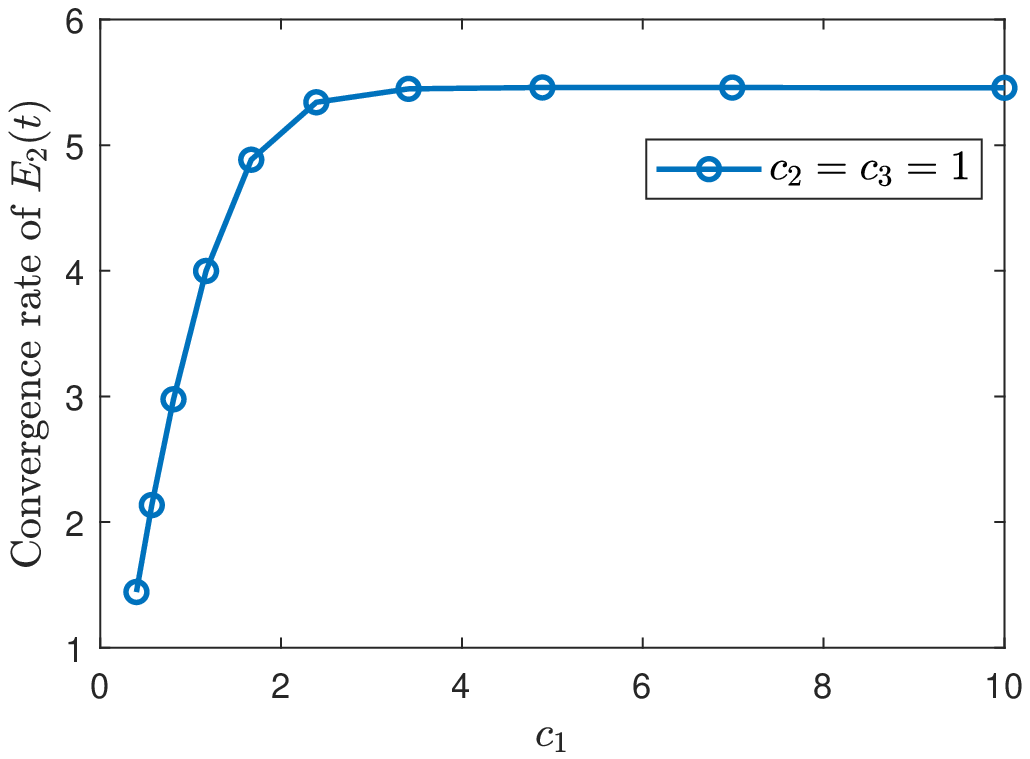}
        \centerline{\footnotesize{(b)}}
    \end{minipage}
    \caption{(a) Dynamics of $E_{2}(t)$ under different values of $c_1$. (b) The convergence rate of $E_2(t)$ with respect to $c_1$. The parameters that are kept fixed are $c_2=1$, $c_3=1$. }
\label{fig3}
\end{figure}
\section{Conclusions}\label{sec-conc}
In this paper, we investigated synchronization of the high order layered network (\ref{model}) and explored relations between synchronization and the eigenvalues of the coupling matrices. An effective theoretical analysis was provided. Numerical examples confirmed our analytic results.


\end{document}